\newcommand{\figref}[1]{Fig.~\ref{#1}}
\newtheorem{lemmacounter}{Theorem}
\newtheorem{Proposition}[lemmacounter]{Proposition}
\newcommand{\comment}[1]{}
\begin{document}

	\title{Reducing the Mutual Outage Probability of Cooperative Non-Orthogonal Multiple Access}

	\author{\IEEEauthorblockN{Sana Riaz, Fahd Ahmed Khan, Sajid Saleem and Qasim Zeeshan Ahmed} 
		\thanks{
			Sana Riaz and Fahd Ahmed Khan are with the School of Electrical Engineering and Computer Science (SEECS), National University of Sciences and	Technology (NUST), Islamabad, Pakistan. Email: \{14mseesriaz, fahd.ahmed\}@seecs.edu.pk.  Sajid Saleem is with Department of Computer and Network Engineering, University of Jeddah, Jeddah, Saudi Arabia. Email: ssaleem@uj.edu.sa. Qasim Zeeshan Ahmed is with the School of Computing and Engineering at the University of Huddersfield, United Kingdom. Email: q.ahmed@hud.ac.uk. The corresponding author is Fahd Ahmed Khan.} }
	
	\maketitle
	
	\begin{abstract} 
		In this letter, a new power allocation scheme is proposed to improve the reliability of cooperative non-orthogonal multiple access (CO-NOMA). The strong user is allocated the maximum power, whereas the weak user is allocated the minimum power. This power allocation alters the	decoding sequence along with the signal-to-interference plus	noise ratio (SINR), at the users. The weak user benefits from receiving multiple copies of the signal whereas the strong user benefits from the higher power allocation. {Numerical simulation results show that the proposed scheme has a lower mutual outage probability (MOP) and offers better reliability as compared to the conventional power allocation scheme for CO-NOMA. An exact closed-form expression of MOP is derived	for the two-user CO-NOMA system and it is shown that each user achieves full diversity. The proposed allocation is able to achieve approximately 30\% higher transmission rate at 15 dB as compared to conventional CO-NOMA in a practical non-power balanced scenario. 
			
		}
	\end{abstract}  
	
	\begin{IEEEkeywords}
		Cooperative NOMA, Power allocation, Performance analysis, Mutual outage probability. 
	\end{IEEEkeywords}
	
	\section{Introduction}

	Non-orthogonal multiple access (NOMA) is a promising scheme that offers
	higher throughput by allowing each user in the network to operate at the same time and frequency~\cite{BibMakki2020, Application, BibDai2015}. In particular, NOMA allocates different power levels to the users and transmits a superimposed message, which is then decoded by the intended users through successive interference cancellation (SIC)~\cite{BibDing2014}.
		The performance of NOMA, in terms of the outage probability and ergodic sum-rate, in a downlink cellular network with randomly deployed mobile users has been investigated in \cite{Performance}, demonstrating that NOMA can achieve a higher rate than orthogonal multiple access (OMA). Outage performance of NOMA in presence of imperfect channel knowledge was analyzed in \cite{Partial}. The impact of user pairing on the performance of two NOMA systems, NOMA with fixed power allocation and cognitive radio inspired NOMA, was discussed in \cite{UserPairing}.
		Performance gain of both schemes depends on different kinds of user pairing, CR-NOMA prefers to pair the user with the best channel
		condition with the user having second best channel condition, whereas the user with the worst channel condition is preferred by F-NOMA as it performs better over
		conventional MA by selecting users whose channel conditions are more distinctive.
	The SIC performed in conventional NOMA enables users with better channel conditions to decode not only their own information but also the messages intended for other users. This additional information has been exploited in \cite{MAIN} to propose a cooperative NOMA (CO-NOMA) protocol. The users who decode other users' messages, become relays and transmit the other users' messages based on the NOMA principle in the cooperation phase, yielding a higher signal-to-noise ratio (SNR) at the weak users. CO-NOMA was shown to yield a lower mutual outage probability (MOP) and as a result, has better reliability compared to NOMA. The MOP metric, initially proposed in \cite{BibKamel1,BibKamel2}, is very useful  for performance evaluation of a multi-user network. A lower MOP implies a lower outage probability for all users \cite{BibKamel1,BibKamel2}. In addition, in \cite{MAIN} it was shown that each user in CO-NOMA is able to achieve full diversity.

	There have been several works discussing power allocation for NOMA, see \cite{Maraqa19} and the references therein, where various rate optimal NOMA schemes are discussed. However, there has been limited work related to the impact of power allocation on performance of CO-NOMA \cite{BibChen2007,BibLiu2018}. Power allocation to maximize the minimum achievable rate for a two-user  half-duplex CO-NOMA (HDCO-NOMA) and for a two-user full-duplex CO-NOMA (FDCO-NOMA) was proposed in \cite{BibChen2007, BibLiu2018} and it was shown that adapting the transmit power in the cooperative phase can improve the user throughput. Moreover, a hybrid of HDCO-NOMA and FDCO-NOMA was also proposed to further enhance the user throughput \cite{BibChen2007, BibLiu2018}. The spectral efficiency of CO-NOMA is enhanced by employing full duplex transmission which reduces the time slots in the cooperation phase \cite{BibLiu2018, BibWang2019}. In addition, the energy efficiency of CO-NOMA systems can be improved by energy harvesting, see \cite{BibWang2019} and references therein. In order to improve the data-rate of broadcasting and multi-casting in vehicular communication networks using CO-NOMA, power allocation to maximize the minimum throughput for both HDCO-NOMA and FDCO-NOMA was proposed in \cite{BibLiu2019}.  	
	
	{	
		In \cite{MAIN, BibChen2007, BibLiu2018, BibWang2019, BibLiu2019, BibQasim}, the power coefficients and the decoding sequence used in CO-NOMA was based on the NOMA principle in which the user with weak channel (the weak user) is allocated higher power and the user with the strong channel (the strong user) is allocated a lower power. In NOMA, all users receive only a single copy of the signal. This power allocation does reduce MOP of the network in case of NOMA. However, in CO-NOMA, the weak users receive more copies of the signal during the cooperative phase, unlike NOMA. Therefore, for CO-NOMA, a power allocation borrowed from NOMA is sub-optimal as it does not consider the signal copies received during the cooperative phase. If the power allocation from NOMA is utilized in CO-NOMA, the weak users receive more copies of the signal in addition to being allocated the higher power. This does reduce the outage probability of the weak user but on the other hand it does not reduce the outage probability of the strong user which only receives a single copy of the signal with lowest power. This results in a larger MOP implying a higher probability that any of the users is in outage in the network.

		\comment{
			In \cite{MAIN, BibChen2007, BibLiu2018, BibWang2019, BibLiu2019}, the power coefficients and the decoding sequence used in CO-NOMA was based on the NOMA principle in which the user with better channel (the strong user) is allocated lower power and only receives a single copy of the signal. Moreover, it decodes other users' data and relays it to them in the cooperation phase. In this way, the users with the poorer channels (the weak users) receive more than one copy of the signal in addition to being allocated the higher power. This scheme which was borrowed from NOMA does reduce the outage probability of the weak user but on the other hand it increases the outage probability of the strong user. This results in a larger MOP (which is the probability that any of the users is in outage) for the network.
			
			In this letter, a power allocation scheme is proposed for CO-NOMA which reduces the MOP. A lower MOP implies a lower outage probability for all users. In the proposed scheme, different from conventional CO-NOMA, higher power is assigned to the strong user. Therefore, this user only receives a single copy of the signal but benefits from the higher power. The weak user benefits by receiving multiple copies of the signal which provides diversity gain and improves reliability. Numerical simulations show that the proposed power allocation, reduces the MOP and improves reliability compared to the conventional CO-NOMA scheme considered in \cite{MAIN, BibChen2007, BibLiu2018, BibWang2019, BibLiu2019}. Moreover, for a two-user cooperative network, exact-closed form expressions for MOP and user outage probability are derived. To the best of authors knowledge, the MOP expression has not been derived in any of the previous works \cite{MAIN, BibChen2007, BibLiu2018, BibWang2019, BibLiu2019}. In addition, the diversity order of each user is derived and it is shown that each user achieves full diversity. Exact outage probability expressions for the strongest user and the weakest user are also presented.  
		}

		In this letter, a heuristic power allocation scheme is proposed for CO-NOMA which reduces the MOP. In the proposed scheme, different from conventional CO-NOMA, higher power is assigned to the strong user. Therefore, this user only receives a single copy of the signal but benefits from the higher power. The weak user benefits by receiving multiple copies of the signal which provides diversity gain and improves reliability. Numerical simulations show that the proposed power allocation, has a lower MOP and improves reliability compared to the conventional CO-NOMA scheme considered in \cite{MAIN, BibChen2007, BibLiu2018, BibWang2019, BibLiu2019,BibQasim}. In addition, for a two-user cooperative network, exact closed-form expressions for MOP and user outage probability are derived and it is shown that each user achieves full diversity. Moreover, in a practical non-power-balanced scenario with 90\% non-outage, simulation results show that the proposed allocation can achieve approximately 30\% higher rate at 15 dB, compared to conventional CO-NOMA. \comment{ Exact outage probability expressions for the strongest user and the weakest user are also presented. }
	}

	\section{System Model and the Proposed Scheme}
	
	Consider a downlink communication scenario in a network having one base station (BS) and $K$ users. Signal transmission from the BS to the users is carried out in two phases, 1) Direct phase and 2) Cooperative phase \cite{MAIN}. In the direct phase, the BS broadcasts a combination of messages of $K$ users, $\sum_{m=1}^K p_m s_m$, where $ {s_m}$ is the message signal of the $m$-th user and $p_m$ is the power allocation coefficient for the $m$-th user. The signal received at the $k$-th user, in the direct phase (first time slot), is given by

	\begin{equation*} 
	y_{1,k} = \sqrt{\rho}{h_k} \sum_{m=1}^K p_m s_m + n_{1,k},
	\end{equation*} 
	where ${\rho}$ is the transmit power, ${n_{1,k}}$ is the noise added at the $k$-th user and ${h_k}$ is the channel gain between the BS and the $k$-th user. The channel is assumed to be independent and identically distributed (i.i.d.) and has Rayleigh fading. Therefore, $h_k$ is a zero mean complex Gaussian random variable (RV). Without the loss of generality, $h_k$ is assumed to have unit variance. As a result, the channel power gain, $|h_k|^2$ is a unit mean exponential RV\footnote{The mean power of the channel is assumed to be fixed over both the phases. However, the channel gains may vary after each time slot.} \cite{6649263, 8007301, 6478151,6553287}. {The noise at the users is assumed to be additive white Gaussian noise with zero mean and unit variance.} During the direct phase, the signal at the users is decoded using the principle of SIC. The users are able to decode the stronger signal, subtract it from the combined signal, and then extract the intended messages of all the users from the residue.
	
	In the cooperative phase, other users' messages which are decoded using SIC, are re-transmitted to achieve better reliability and improved quality-of-service (QoS) at each user. Similar to \cite{MAIN}, it is assumed that each user re-transmits with power $\rho$. Users are ordered based on their channel power as $ |h_1|^2 \leq {|h_2|^2} \leq .....\leq |h_K|^2$. In the first time slot of this phase, the $K$-th user (which is the user with the best channel) broadcasts, to the other users, a signal which is a combination of the other ${(K-1)}$ user's messages ${{\sum_{m=1}^{K-1}\ q_{K,m} s_m }}$, where $q_{K,m}$ denote the power allocation coefficient for the $m$-th user at the $K$-th user and  ${ {\sum_{m=1}^{K-1}\ {q_{K,m}^2} = 1}}$ \cite{MAIN}. The signal received at the $k$-th user from the $K$-th user in this time slot is
	\begin{equation*}
	{y_{2,k} = \sqrt{\rho}{g_{K,k}\sum_{m=1}^{K-1} q_{K,m} s_m + n_{2,k}}},
	\end{equation*}
	where  $k \in \left\{1,2,\cdots, (K-1)\right\}$, $n_{2,k}$ is the noise added at the $k$-th user, ${g_{i,j}}$ denotes the zero mean complex Gaussian channel gain from the $i$-th user to the $j$-th user. The ${(K-1)}$-th user combines the signals from both phases by using maximum ratio combining~(MRC) before decoding each users data. This user then broadcasts, to the remaining users, a signal which is a combination of the other ${(K-2)}$ users' messages. During the ${n}$-th time slot of this phase, ${1\leq n \leq (K-1)}$, the ${(K-n+1)}$-th user broadcasts the combination of ${(K-n)}$ messages with the power allocation coefficients ${ q_{K-n+1,m}}$ and the signal received at the ${k}$-th user, where $k \in \left\{1, 2,\cdots,(K-n)\right\}$, is
\vspace{-6pt}
	\begin{equation*}
	{y_{n+1,k} = {\sqrt{\rho}} {g_{K-n+1,k} \sum_{m=1}^{K-n} q_{K-n+1,m} s_m + n_{n+1,k}}},
	\end{equation*}
    where $n_{n+1,k}$ is the noise added at the $k$-th user.\\
	\emph{\bf Proposed Power Allocation:} 
	Users are ordered based on their channel power as $ {|h_1|^2 } \leq {|h_2|^2} \leq .....\leq |h_K|^2$.
	\comment{
		\begin{equation}
		{|h_1|^2 } \leq {|h_2|^2} \leq .....\leq |h_K|^2 \label{A}
		\end{equation}
	}
	The power coefficients allocated to the ordered users during the first phase satisfy

	\begin{equation}
	\label{eq:powerallocation}
	{p_1^2 } \leq {p_2^2} \leq .....\leq p_K^2,
	\end{equation}
	with ${ {\sum_{m=1}^K\ {p_m}^2 = 1}}$. This new power allocation is different from the one proposed in \cite{MAIN} where the power allocation to the ordered users is ${p_K^2 \leq {p_{K-1}^2} \leq .....\leq {p_1^2 }}$, which allocates high power to the weak user. The weak user also receives multiple copies of the signal, whereas the strong user is allocated lower power and only receives a single copy of the signal. This allocation in \cite{MAIN} benefits the users with the poor channel and neglects the users with the stronger channel.
	
	On the contrary, the new power allocation in \eqref{eq:powerallocation} allocates higher power to the strong user and lower power to the weak user. The weak user benefits from diversity as it receives multiple copies of the signal. Moreover, as the power allocation in \eqref{eq:powerallocation} is opposite to that in \cite{MAIN}, the sequence of decoding using SIC is also reversed resulting in different signal-to-interference plus noise ratio (SINR) at the respective users.
	
	{
		
		\emph{\bf Proposed Decoding Sequence:}
		In the proposed scheme, as the strongest user is allocated the maximum power, therefore, each user has to decode the data of stronger users before decoding their own data. Assuming, that there are $K$ users, the steps for decoding are as follows:
		\noindent
		\emph{First time slot:} 
		\begin{enumerate}
			\item The $K$-th user (which is the strongest user) will decode its data first as it is allocated the highest power. On the contrary, in conventional CO-NOMA, the $K$-th user data is decoded at the end as it is allocated the least power. 
			\item The $K$-th user will perform SIC and decode all other user's data.
			\item All the remaining users will decode only the $K$-th user's data and perform SIC. They will neither decode their own data nor any other users data.
		\end{enumerate}
		\noindent
		\emph{Cooperative Phase - $n$-th time slot ($1<n<K$):} 
		\begin{enumerate}
			\item The $(K-n+1)$-th user will receive signal from the $(K-n+2)$-th user where the $(K-n+1)$-th user data is allocated the highest power. It will combine the signal copy received in this time slot and all the previous time slots using MRC and then decode its data. 
			\item The $(K-n+1)$-th user will perform SIC and decode the data of all the remaining users based on the combined signal from existing and all previous time slots. 
			\item All the remaining users will decode only the $(K-n+1)$-th user data and perform SIC. They will neither decode their own data nor any other users data.
		\end{enumerate}

		\noindent
		\emph{$K$-th time slot:} 
		\begin{enumerate}
			\item The weakest user will receive signal from user 2. It will combine the signal copy received in this time slot and all the previous time slots using MRC and then decode its data.
		\end{enumerate}			
		
		It can be noted that, similar to conventional CO-NOMA, the weakest user data is decoded in the $K$-th time slot. No additional time slot is needed for decoding the weak user data compared to orthogonal multiple access (OMA) or conventional CO-NOMA. The decoding steps for the proposed scheme, for $K=3$, are summarized in Table 1. 
	}
	
	%

	\begin{table*}
	{
	\begin{sideways}	
		\begin{minipage}{20.5cm}
  
				\caption{Signal Decoding and SINR for a three user network where $|h_{1}|^{2}<|h_{2}|^{2}<|h_{3}|^{2}$ and $|p_{1}|^{2}<|p_{2}|^{2}<|p_{3}|^{2}$}
\hspace{-1.7cm}
					\begin{tabular}{llll}
						\toprule
						\textbf{\textsc{User}} & \textbf{\textsc{ Time Slot 1}} & \textbf{\textsc{Time Slot 2}}& \textbf{\textsc{Time Slot 3}}\\
						\toprule
						Strong
						& \begin{tabular}{@{}l@{}}
							$y_{1,3}=h_{3}\sqrt{\rho}\left(p_{1}s_{1}+p_{2}s_{2}+p_{3}s_{3}\right)+n_{1,3}$  \\ 
							1. Decode $s_3$:~$\gamma_{1,3,strong}=\frac{p_{3}^{2}|h_{3}|^{2}}{\left(p_{1}^{2}+p_{2}^{2}\right)|h_{3}|^{2}+\frac{1}{\rho}}=\gamma_{3}$ \\
							2. Decode $s_2$:~$\frac{p_{2}^{2}|h_{3}|^{2}}{p_{1}^{2}|h_{3}|^{2}+\frac{1}{\rho}}$\\
							3. Decode $s_1$:~$\gamma_{1,3,weak}=\rho p_{1}^{2}|h_{3}|^{2}$\end{tabular}
						& 1. Broadcast $\sqrt{\rho} \left(q_{3,1}s_1+q_{3,2}s_2\right)$ to weaker users
						& \begin{tabular}{@{}l@{}} 
							
						\end{tabular}\\
						\midrule
						
						Mid 
						& \begin{tabular}{@{}l@{}}
							$y_{1,2}=h_{2}\sqrt{\rho}\left(p_{1}s_{1}+p_{2}s_{2}+p_{3}s_{3}\right)+n_{1,2}$\\
							1. Decode $s_3$:~$\gamma_{1,2,strong}=\frac{p_{3}^{2}|h_{2}|^{2}}{\left(p_{1}^{2}+p_{2}^{2}\right)|h_{2}|^{2}+\frac{1}{\rho}}$
							\\
							
							2. Perform SIC; $s_2$ and $s_1$ are not decoded in this time slot.

						\end{tabular} 
						& \begin{tabular}{@{}l@{}} $	y_{2,2}=g_{3,2}\sqrt{\rho}\left(q_{3,1}s_{1}+q_{3,2}s_{2}\right)+n_{2,2}$ \\
							
							1. Combine the signal $s_2$ from both time slots using MRC and decode 
							\\
							\hspace{10pt} SINR for $s_2$:~$\gamma_{2}=\frac{p_{2}^{2}|h_{2}|^{2}}{p_{1}^{2}|h_{2}|^{2}+\frac{1}{\rho}}+\frac{q_{3,2}^{2}|g_{3,2}|^{2}}{q_{3,1}^{2}|g_{3,2}|^{2}+\frac{1}{\rho}}$ 
							
							\\
							2. Perform SIC and decode $s_1$:
							\\ 
							
							\hspace{10pt} SINR for $s_1$:~$\gamma_{2,2,weak}=\rho p_{1}^2|h_2|^{2}+\rho q_{3,1}^2|g_{2,1}|^{2}$\\

						\end{tabular}
						& \begin{tabular}{@{}l@{}} 
							
							1. Broadcast $\sqrt{\rho} q_{2,1}s_1$ to weakest user, \\
where $q_{2,1}=1$.
							
						\end{tabular}\\
						\midrule
						Weak 
						& \begin{tabular}{@{}l@{}}
							
							$y_{1,1}=h_{1}\sqrt{\rho}\left(p_{1}s_{1}+p_{2}s_{2}+p_{3}s_{3}\right)+n_{1,1}$
							\\
							1. Decode $s_3$:~$\gamma_{1,1,strong}=\frac{p_{3}^{2}|h_{1}|^{2}}{\left(p_{1}^{2}+p_{2}^{2}\right)|h_{1}|^{2}+\frac{1}{\rho}}$
							\\
							
							2. Perform SIC; $s_2$ and $s_1$ are not decoded in this time slot.

						\end{tabular} 
						& \begin{tabular}{@{}l@{}} 
							$y_{2,1}=g_{3,1}\sqrt{\rho}\left(q_{3,1}s_{1}+q_{3,2}s_{2}\right)+n_{2,1}$
							\\	
							1. Combine the signal $s_2$ from both time slots using MRC and decode 
							\\
							\hspace{10pt} SINR for  $s_2$:~$\gamma_{2,1,mid}=\frac{p_{2}^{2}|h_{1}|^{2}}{p_{1}^{2}|h_{1}|^{2}+\frac{1}{\rho}}+\frac{q_{3,2}^{2}|g_{3,1}|^{2}}{q_{3,1}^{2}|g_{3,1}|^{2}+\frac{1}{\rho}}$
							
							\\
							2. Perform SIC, but do not decode $s_1$	
						\end{tabular}
						& \begin{tabular}{@{}l@{}} 
							$y_{3,1}=g_{2,1}\sqrt{\rho}q_{2,1}s_{1}+n_{3,1}$ \\
							1. Combine the signal $s_1$, from all time\\
							 slots  using MRC and decode:\\
							
							\hspace{10pt} SINR for $s_1$:~$\gamma_{1}=\rho p_{1}^{2}|h_{1}|^{2}$\\
\hspace{1cm}$+ \rho q_{3,1}^2|g_{3,1}|^2+\rho |g_{2,1}|^{2}$
							
						\end{tabular}\\
						\midrule 
						\bottomrule 
				\end{tabular}
		
		\end{minipage}
\end{sideways}
	}
	\end{table*}

	\section{Performance Analysis}
	\comment{
		In the direct phase, the user with largest power coefficient i.e. the $K$-th user detects its own message and considers the rest as interference.
		$(K-1)$-th user will first detect $K$-th user's message, as its power coefficient has the highest value, and subtract it from the message received and
		decode its own information from the residue. Hence, SIC is carried out at each user except the $K$-th user. SINR at $k$-th user, ${(k>1)}$, after
		detection is given as
		\vspace{-8pt}
	}
	\comment{
		In the direct phase, the $k$-th user detects its own message and considers the rest as interference. SINR at $k$-th user, ${(k>1)}$, after detection
		is given as
		\begin{equation}
		\gamma_{k}= \frac{|h_k|^2|p_k|^2}{\sum_{i={1}}^{k-1}|h_k p_i|^2+\frac{1}{\rho}}
		\end{equation}
		where ${\rho}$ is the transmit SNR. In this phase, user 1, which is the user with the worst channel, decodes information of all the other users
		and then subtracts it from the received signal to decode its own information resulting in SINR $\gamma_1 = \rho |h_1|^2 |p_1|^2$.
	}
	\comment{
		Assuming that the ${K}$-th user decodes ${K-1}$ users data without error, the SINR at $(K-1)$-th user is given as
		\begin{equation}
		\begin{split}
		\gamma_{K-1} &= {\frac{|h_{K-1}|^2|p_{K-1}|^2}{|h_{K-1}|^{2}\sum_{i={1}}^{K-2}|p_i|^2+\frac{1}{\rho}}}
		\\
		&+ \frac{|g_{K,K-1}|^2|q_{K,K-1}|^2}{|g_{K,K-1}|^{2} \sum_{i={1}}^{K-2}q_{K,i}^2+{\frac{1}{\rho}}}
		\end{split}
		\end{equation}
	}
	
	\comment{
		In the second phase (the cooperative phase), the SINR at ${(K-n)}$-th user, where ${n<K-1}$, is given as
		\begin{equation}
		\scriptsize
		\gamma???={\sum_{i={1}}^{n}\frac{|g_{K-i+1,K-n}|^2|q_{K-i+1,K-n}|^2}{|g_{K-i+1,K-n}|^{2}\sum_{m={1}}^{K-n-1}q_{K-i+1,m}^2+\frac{1}{\rho}}}
		\end{equation}
		For ${n=K-1}$, the SINR at ${(K-n)}$-th user is given as
		\begin{equation}
		\gamma_{1}= \rho\sum_{i={1}}^{n} |g_{K-i+1,1}|^2 |q_{K-i+1,1}|^2
		\end{equation}
	}
	In the cooperative phase, the received signals at the users are combined using MRC. The combined SINR for the ${(K-n)}$-th
	user, for ${0\leq n<K-1}$, is given as
	\begin{equation}
	\label{eq:SINR1}
	\gamma_{K-n} = {\frac{|h_{K-n}|^2 p_{K-n}^2}{|h_{K-n}|^{2}\sum_{m={1}}^{K-n-1}p_m^2+\frac{1}{\rho}}}
		+ {\sum_{i={1}}^{n}\frac{|g_{K-i+1,K-n}|^2q_{K-i+1,K-n}^2}{|g_{K-i+1,K-n}|^{2}\sum_{m={1}}^{K-n-1}q_{K-i+1,m}^2+\frac{1}{\rho}}}.
	\end{equation}
	It can be noted that the opposite sequence of SIC in the proposed scheme makes~\eqref{eq:SINR1} different from the traditional CO-NOMA in terms of power coefficients as well as the limits of summation~(cf. \cite[Eq. (7)]{MAIN} and~\cite[Eq. (3)]{BibLiu2019}) . For ${n=K-1}$, i.e. the user with the worst channel, the SINR is given as
	\begin{equation}
	\label{eq:SINR2}
	\gamma_{1}= \rho |h_1|^2 p_1^2 + \rho\sum_{i={1}}^{K-1} |g_{K-i+1,1}|^2 q_{K-i+1,1}^2.
	\end{equation}
	Again it can be noted that the SINR of the weak user in \eqref{eq:SINR2} is different compared to the scheme proposed in \cite[Eq. (7)]{MAIN}.
	
	The performance metrics considered to analyze the proposed scheme are outage probability of the respective users and the MOP, which is defined as
	\begin{equation}
	\label{eq:MOPDef}
	\mathcal{M_O} = 1- \text{Pr}\left(\gamma_1>\phi_1, \gamma_2>\phi_2, \ldots , \gamma_K>\phi_K  \right),
	\end{equation}
	where $\phi_k=2^{R_k}-1$, $R_k \in \{1,2,3...\}$ and denotes the desired rate for communication of the $k$-th user in bits per channel use \cite{MAIN}. The outage probability of the respective users will always be less than or equal to the MOP (proof provided in Appendix B). So, if a network is designed to achieve a MOP equal to $\alpha$, it will be guaranteed that each user in the network will achieve an outage probability less than or equal to $\alpha$ and experience a good QoS.

\noindent
	\emph{\bf Outage Probability of the Weakest User:}
	
	The SNR of the weakest user, in \eqref{eq:SINR2}, can be expressed as $\gamma_{1}=\sum_{i=0}^{K-1}X_{i}$ where $X_{0}=\rho|h_{1}|^{2}p_{1}^{2}$ and $X_{i}=\rho q_{K-i+1,1}^{2}|g_{K-i+1,1}|^{2}$. Denoting an exponential RV, $A$, with mean $\lambda$ as $A\sim\exp\left(\lambda\right)$, it can be noted that $X_{0}\sim\exp\left(\frac{\rho p_{1}^{2}}{K}\right)$ ($|h_{1}|^{2}$ is the minimum of $K$ exponential RVs) and $X_{i}\sim\exp\left(\rho q_{K-i+1,1}^{2}\right)$. Thus, $\gamma_{1}$ is a sum of independent and non-identical (i.n.i.d) exponential RVs having CDF 
	\begin{equation}
	\label{eq:CDFWeakUser}
	F_{\gamma_{1}}\left(\gamma\right)=\sum_{i=0}^{K-1}C_{i}\left(1-e^{-\frac{\gamma}{\lambda_{i}}}\right),
	\end{equation}
	where $C_{i}=\prod_{j\neq i}\frac{\frac{1}{\lambda_{j}}}{\frac{1}{\lambda_{j}}-\frac{1}{\lambda_{i}}}$, $\lambda_{0}=\frac{\rho p_{1}^{2}}{K}$ and $\lambda_{i}=\rho q_{K-i+1,1}^{2}$ \cite{BibRossBook}. \emph{ The outage probability of the weakest user can be obtained by evaluating the CDF in \eqref{eq:CDFWeakUser} at $\gamma=\phi_1$.}
	
	\noindent
	\emph{\bf Outage Probability of the Strongest User:}
	
	The SINR of the remaining users, in \eqref{eq:SINR1}, can be expressed as $\gamma_{K-n}=Y_{n}+\sum_{i=1}^{n}Z_{i}$, where $Y_{n}=\frac{|h_{K-n}|^{2}p_{K-n}^{2}}{|h_{K-n}|^{2}\sum_{m=1}^{K-n-1}p_{m}^{2}+\frac{1}{\rho}}$ and $Z_{i}=\frac{|g_{K-i+1,K-n}|^{2}q_{K-i+1,K-n}^{2}}{|g_{K-i+1,K-n}|^{2}\sum_{m=1}^{K-n-1}q_{K-i+1,m}^{2}+\frac{1}{\rho}}$.
	
The CDF of $Y_{n}$, $F_{Y_{n}}(y)=\text{Pr}\left( \frac{|h_{K-n}|^{2}p_{K-n}^{2}}{|h_{K-n}|^{2}\sum_{m=1}^{K-n-1}p_{m}^{2}+\frac{1}{\rho}} \leq y  \right)$, is simplified as
	\begin{equation}
	F_{Y_{n}}(y)=\begin{cases}
	F_{|h_{K-n}|^{2}}\left(\frac{1}{\rho}\omega_{0,n}(y)\right) & ;\hfill y\leq\zeta_{0,n}\\
	\qquad\hfill1 & ;\hfill y>\zeta_{0,n},
	\end{cases}
	\end{equation}
	where $\omega_{0,n}(y)=\frac{y}{\left(p_{K-n}^{2}-y\sum_{m=1}^{K-n-1}p_{m}^{2}\right)\,}$,
	$\zeta_{0,n}=\frac{p_{K-n}^{2}}{\sum_{m=1}^{K-n-1}p_{m}^{2}}$ and
	$|h_{K-n}|^2$ is the $\left(K-n\right)$-th highest channel power gain among $K$ channel gains. Using order statistics, for the Rayleigh fading scenario, the CDF is obtained as
	\comment{
		channel gains and has distribution 
		\begin{equation}
		F_{H_{m}}(h)=\sum_{i=0}^{m-1}\binom{M}{i}\left(F_{H_{i}}\left(h\right)\right)^{M-i}\left(1-F_{H_{i}}\left(h\right)\right)^{i}
		\end{equation}

		\begin{equation}
		F_{h_{n+1}}(h)=\sum_{i=0}^{n}\binom{K}{i}\left(F_{h_{i}}\left(h\right)\right)^{K-i}\left(1-F_{h_{i}}\left(h\right)\right)^{i}
		\end{equation}
	}
	\begin{equation}
	\label{eq:CDFYn}
	F_{Y_{n}}(y)=\begin{cases}
	\sum_{i=0}^{n}\binom{K}{i}\left(1-e^{-\frac{1}{\rho}\omega_{0,n}(y)}\right)^{K-i}\left(e^{-\frac{1}{\rho}\omega_{0,n}(y)}\right)^{i} & ;\hfill y\leq\zeta_{0,n}\\
	\qquad\hfill1 & ;\hfill y>\zeta_{0,n}.
	\end{cases}
	\end{equation}
	Similar to $Y_n$, the CDF of $Z_{i}$, $F_{Z_{i}}(z)=\text{Pr}\left( \frac{|g_{K-i+1,K-n}|^{2}q_{K-i+1,K-n}^{2}}{|g_{K-i+1,K-n}|^{2}\sum_{m=1}^{K-n-1}q_{K-i+1,m}^{2}+\frac{1}{\rho}} \leq z \right)$, yields
	\begin{equation}
	F_{Z_{i}}(z)=\begin{cases}
	F_{|g_{K-i+1,K-n}|^{2}}\left(\frac{1}{\rho}\omega_{i,n}\left(z\right)\right) & ;\hfill z\leq\zeta_{i,n}\\
	\qquad\hfill1 & ;\hfill z>\zeta_{i,n},
	\end{cases}
	\end{equation}
	where $\omega_{i,n}\left(z\right)=\frac{z}{\left(q_{K-i+1,K-n}^{2}-z\sum_{m=1}^{K-n-1}q_{K-i+1,m}^{2}\right)\,}$,
	$\zeta_{i,n}=\frac{q_{K-i+1,K-n}^{2}}{\sum_{m=1}^{K-n-1}q_{K-i+1,m}^{2}}$ and $|g_{K-i+1,K-n}|^2$ are unit mean i.i.d. exponential RVs, which implies that
	\begin{equation}
	\label{eq:CDFZn}
	F_{Z_{i}}(z)=\begin{cases}
	\left(1-e^{-\frac{1}{\rho}\omega_{i,n}\left(z\right)}\right) & ;\hfill z\leq\zeta_{i,n}\\
	\qquad\hfill1 & ;\hfill z>\zeta_{i,n}.
	\end{cases}
	\end{equation}
	\emph{The outage probability of the strongest user can be obtained by substituting $n=0$ and $\gamma=\phi_K$ in the CDF in \eqref{eq:CDFYn}.} 
	
	Derivation of the analytical expression for the outage probability for the remaining users leads to an intractable analysis. Specifically, the moment generating function (MGF) of $Y_n$ and $Z_i$ cannot be obtained in closed-form. In order to gain some insight into the outage performance of the users, the diversity order achieved by each user is calculated.

	\noindent
	\emph{\bf Diversity Order:}
	For two non-negative RVs, $A$ and $B$, it can be shown
	\begin{equation*}
	\begin{split}
	&\text{Pr}(A+B\leq x)=\int\int_{a+b\leq x}f_{A,B}(a,b)dadb
	\\
	&\leq\int\int_{a\leq x,b\leq x}f_{A,B}(a,b)dadb=\text{Pr}(A\leq x,B\leq x).
	\end{split}
	\end{equation*}
	This result can be generalized for a large number of non-negative RVs, $A_{j}$, $j\in\left\{ 1,..,\infty\right\} $ as
	\begin{equation}
	\label{eq:DivIdentity}
	\text{Pr}\left(\sum_{i}A_{i}\leq x\right)\leq \text{Pr}\left(\cap_{i}\left(A_{i}\leq x\right)\right).
	\end{equation}

	In order to obtain the diversity order for the weakest user, utilizing the identity in \eqref{eq:DivIdentity}, the CDF in \eqref{eq:CDFWeakUser} is bounded as
	\begin{equation}
	F_{\gamma_{1}}\left(\gamma\right)=\text{Pr}\left(\sum_{i=0}^{K-1}X_{i}\leq\gamma\right)\leq\prod_{i=0}^{K-1}\text{Pr}\left(X_{i}\leq\gamma\right).
	\end{equation}
	As $X_i$ are independent, the bounded CDF is expressed as 	
	\begin{equation}
	F_{\gamma_{1}}\left(\gamma\right)\leq\prod_{i=0}^{K-1}\left(1-e^{-\frac{\gamma}{\lambda_{i}}}\right).
	\end{equation}
	As $\rho\rightarrow\infty$, implies $\frac{\gamma}{\lambda_{i}} \rightarrow 0$. As argument of exponential function approaches 0, it can be approximated using Taylor series as $e^x\approx1+x$ which implies $1-e^{-\frac{\gamma}{\lambda_{i}}} \approx 1-\left(1-\frac{\gamma}{\lambda_{i}}\right)$, yielding
	\begin{equation}
	\label{eq:AsymptoticWeakUser}
	F_{\gamma_{1}}\left(\gamma\right)\leq F_{\gamma_{1}}^{\infty}\left(\gamma\right)=\prod_{i=0}^{K-1}\left(\frac{\gamma}{\lambda_{i}}\right)=\frac{\gamma K}{\rho p_{1}^{2}}\prod_{i=1}^{K-1}\left(\frac{\gamma}{\rho q_{K-i+1,1}^{2}}\right).
	\end{equation}
	The diversity order for the weakest user can be obtained as\\ $\mathcal{D}_{1}=-\lim_{\rho\rightarrow\infty}\frac{\log\left(F_{\gamma_{1}}^{\infty}\left(\gamma\right)\right)}{\log\left(\rho\right)}=-\lim_{\rho\rightarrow\infty}\frac{K\log\left(\frac{1}{\rho}\right)+K\log\left(\frac{\gamma K}{ p_{1}^{2}}\prod_{i=1}^{K-1}\left(\frac{\gamma}{q_{K-i+1,1}^{2}}\right)\right)}{\log\left(\rho\right)}=K$.

	\comment{
		In order to obtain the diversity order for the strong user, utilizing the identity in \eqref{eq:DivIdentity}, the CDF can be bounded as
		
		In order to find diversity order, we use the identity $P\left(\sum_{i}A_{i}\leq x\right)\leq P\left(\cap_{i}A_{i}\leq x\right)=\prod_{i}P\left(A_{i}\leq x\right)$
		which implies
		
		\[
		F_{\gamma_{K-n}}\left(\gamma\right)=Pr\left(Y_{n}+\sum_{i=1}^{n}Z_{i}<\gamma\right)\leq Pr\left(Y_{n}<\gamma\right)\prod_{i=1}^{n}Pr\left(Z_{i}<\gamma\right)=F_{Y_{n}}\left(\gamma\right)\prod_{i=1}^{n}F_{Z_{i}}\left(\gamma\right)
		\]

		\textbf{Case 1: Strong user $n=0$}
		
		\[
		F_{\gamma_{K}}\left(\gamma\right)=F_{Y_{0}}\left(\gamma\right)=\sum_{i=0}^{0}\binom{K}{i}\left(1-e^{-\frac{1}{\rho}\omega_{0,0}(\gamma)}\right)^{K-i}\left(e^{-\frac{1}{\rho}\omega_{0,0}(\gamma)}\right)^{i}
		\]

		\[
		F_{\gamma_{K}}\left(\gamma\right)=F_{Y_{0}}\left(\gamma\right)=\left(1-e^{-\frac{1}{\rho}\omega_{0,0}(\gamma)}\right)^{K}
		\]

		As $\rho\rightarrow\infty$, the term with the significant contribution
		is the lowest order term in the series expansion of the exponential
		function, yielding
		
		\[
		F_{\gamma_{K}}^{\infty}\left(\gamma\right)=\left(\frac{1}{\rho}\omega_{0,0}(\gamma)\right)^{K}
		\]

		The diversity order for the strong user can be obtained as $\mathcal{D}_{K}=-\lim_{\rho\rightarrow\infty}\frac{\log\left(F_{\gamma_{K}}^{\infty}\left(\gamma\right)\right)}{\log\left(\rho\right)}=-\lim_{\rho\rightarrow\infty}\frac{\log\left(\left(\frac{1}{\rho}\omega_{0,0}(\gamma)\right)^{K}\right)}{\log\left(\rho\right)}=-\lim_{\rho\rightarrow\infty}\frac{K\log\left(\frac{1}{\rho}\right)+K\log\left(\omega_{0,0}(\gamma)\right)}{\log\left(\rho\right)}=K$. 
		
	}
	
	In order to obtain the diversity order for the remaining users, $0\leq n <K-1$, again utilizing the identity in \eqref{eq:DivIdentity}, the CDF of $\gamma_{K-n}$ can be bounded as
	\begin{equation}
	\label{eq:DivCDFKn}
	F_{\gamma_{K-n}}\left(\gamma\right)=\text{Pr}\left(Y_{n}+\sum_{i=1}^{n}Z_{i}\leq\gamma\right)\leq F_{Y_{n}}\left(\gamma\right)\prod_{i=1}^{n}F_{Z_{i}}\left(\gamma\right).
	\end{equation}
	Substituting the CDFs from \eqref{eq:CDFYn}	and \eqref{eq:CDFZn} into \eqref{eq:DivCDFKn} yields	
	\begin{equation*}
	\begin{split}
 F_{\gamma_{K-n}}\left(\gamma\right)  \leq
 \sum_{i=0}^{n}\binom{K}{i}\left(1-e^{-\frac{1}{\rho}\omega_{0,0}(\gamma)}\right)^{K-i}\left(e^{-\frac{1}{\rho}\omega_{0,0}(\gamma)}\right)^{i} \prod_{k=1}^{n}\left(1-e^{-\frac{1}{\rho}\omega_{k,n}\left(\gamma\right)}\right).
	\end{split}
	\end{equation*}
	Similar to \eqref{eq:AsymptoticWeakUser}, as $\rho\rightarrow\infty$, argument of exponential function approaches 0 and can be approximated using Taylor series as
	\comment{	
		\[
		F_{\gamma_{K-n}}^{\infty}\left(\gamma\right)=\sum_{i=0}^{n}\binom{K}{i}\left(\frac{1}{\rho}\omega_{0,0}(\gamma)\right)^{K-i}\left(1-\frac{1}{\rho}\omega_{0,0}(\gamma)\right)^{i}\prod_{i=1}^{n}\left(\frac{1}{\rho}\omega_{i,n}\left(\gamma\right)\right)
		\]
		
	}	
	\begin{equation}
	\label{eq:AsymptoticRemainingUser}
	F_{\gamma_{K-n}}\left(\gamma\right)\leq F_{\gamma_{K-n}}^{\infty}\left(\gamma\right)=\binom{K}{n}\left(\frac{1}{\rho}\omega_{0,0}(\gamma)\right)^{K-n}\left(\frac{1}{\rho}\right)^{n}\prod_{i=1}^{n}\left(\omega_{i,n}\left(\gamma\right)\right).
	\end{equation}
	\comment{	
		
		\[
		F_{\gamma_{K-n}}^{\infty}\left(\gamma\right)=\left(\frac{1}{\rho}\right)^{K}\binom{K}{n}\left(\omega_{0,0}(\gamma)\right)^{K-n}\prod_{i=1}^{n}\left(\omega_{i,n}\left(\gamma\right)\right)
		\]
		
	} 	
	The diversity order for the $\left(K-n\right)$-th user can be obtained as $\mathcal{D}_{K-n}=-\lim_{\rho\rightarrow\infty}\frac{\log\left(F_{\gamma_{K-n}}^{\infty}\left(\gamma\right)\right)}{\log\left(\rho\right)}=-\lim_{\rho\rightarrow\infty}\frac{K\log\left(\frac{1}{\rho}\right)+K\log\left(\binom{K}{n}\omega_{0,0}^{K-n}(\gamma)\prod_{i=1}^{n}\left(\omega_{i,n}\left(\gamma\right)\right)\right)}{\log\left(\rho\right)}=K$. 
	
	\emph{All users in the proposed scheme achieve full diversity.}
	
	\noindent
	\emph{\bf Mutual Outage Probability}	
	
	As discussed previously, deriving the analytical expression for the outage probability for the $(K-n)$-th user, where $0<n<K-1$, leads to an intractable analysis. Similarly, derivation of the MOP expression, for the $K>2$ case, leads to an intractable analysis. For the $K=2$ user case, the exact closed-form MOP expression can be derived and is given in Proposition 1. For a network with two users, using \eqref{eq:SINR1}, the SINR at strong user is given as
	\begin{equation}
	\label{eq:SINRuser2}
	\gamma_{2}=\frac{\rho p_{h}|h_{2}|^{2}}{\rho p_{l}|h_{2}|^{2}+1},
	\end{equation}
	where the higher power is denoted as $p_{h}=p_{2}^{2}$ and the lower power value is denoted as $p_{l}=p_{1}^{2}$. Using \eqref{eq:SINR2}, the SNR at the weak user is given as
	\begin{equation}
	\label{eq:SINRuser1}
	\gamma_{1}=\rho p_{l}|h_{1}|^{2}+\rho|g_{2,1}|^{2}.
	\end{equation}
	Again it can be noted that the expression for SINR in \eqref{eq:SINRuser2} and SNR in \eqref{eq:SINRuser1} are different from previous works in \cite{MAIN},\cite[Eq. (2)]{BibLiu2018} and thus, yield a different statistical analysis. 
	
	\begin{Proposition}
		The MOP for the two user CO-NOMA system using the power allocation in \eqref{eq:powerallocation} is given in \eqref{eq:MOPFinal} where
		$\phi_2<\frac{p_h}{p_l}$, $\scriptstyle\beta=\frac{1}{\rho p_{h}}\frac{\phi_{2}}{\left(1-\phi_{2}\frac{p_{l}}{p_{h}}\right)}$, $\scriptstyle\Delta=\min\left\{c,\frac{\phi_{1}}{\rho}\right\} $ and
		$\scriptstyle c=\frac{1}{\rho}\left(\phi_{1}-\frac{\phi_{2}}{\left(\frac{p_{h}}{p_{l}}-\phi_{2}\right)}\right)$. If $\phi_2>\frac{p_h}{p_l}$, then at high SNR (large $\rho$), the strong user will be unable to meet the rate requirement and $\mathcal{M_O}=1$. 
		\begin{figure*}[t]
			\begin{equation} 
			\label{eq:MOPFinal}
			\scriptsize
			\begin{split}
			\mathcal{M_O}=1
			&-e^{-\frac{\phi_{1}}{\rho}}\left(1-\left(1-e^{-\frac{1}{\rho
					p_{h}}\frac{\phi_{2}}{\left(1-\phi_{2}\frac{p_{l}}{p_{h}}\right)}}\right)^{2}\right)
			\\
			&-\left(\left(\frac{e^{-\frac{2}{p_{l}}\frac{\phi_{1}}{\rho}}}{\left(1-\frac{2}{p_{l}}\right)}\left(1-e^{-\left(1-\frac{2}{p_{l}}\right)\Delta}\right)\right)+\left(\frac{2e^{-\beta}e^{-\frac{1}{p_{l}}\frac{\phi_{1}}{\rho}}}{\left(1-\frac{1}{p_{l}}\right)}\left(e^{-\frac{1}{\rho}\left(\phi_{1}-\frac{\phi_{2}\left(p_{l}-1\right)}{p_{h}\left(1-\frac{p_{l}}{p_{h}}\phi_{2}\right)}\right)}-e^{-\frac{\phi_{1}}{\rho}\left(1-\frac{1}{p_{l}}\right)}\right)-e^{-2\beta}\left(e^{-c}-e^{-\frac{\phi_{1}}{\rho}}\right)\right)\right)
			\end{split}
			\end{equation}
			\rule{7.1in}{.1pt}
			\vspace{-16pt}
		\end{figure*}
	\end{Proposition}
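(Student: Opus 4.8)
The plan is to evaluate the joint non-outage probability $\mathrm{Pr}(\gamma_1>\phi_1,\gamma_2>\phi_2)$ appearing in \eqref{eq:MOPDef} directly for $K=2$ and subtract it from one. First I would translate each rate constraint into an event on the underlying channel gains. For the strong user, using \eqref{eq:SINRuser2}, the inequality $\gamma_2>\phi_2$ reads $\rho p_h|h_2|^2>\phi_2(\rho p_l|h_2|^2+1)$, i.e. $\rho|h_2|^2(p_h-\phi_2 p_l)>\phi_2$. When $\phi_2<p_h/p_l$ the coefficient $p_h-\phi_2 p_l$ is positive and this rearranges to $|h_2|^2>\beta$; when $\phi_2\ge p_h/p_l$ the map $|h_2|^2\mapsto\gamma_2$ is bounded above by $p_h/p_l\le\phi_2$, so the event is asymptotically unachievable and $\mathcal{M_O}\to1$. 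This already yields the dichotomy in the statement. For the weak user, \eqref{eq:SINRuser1} gives the equivalent event $p_l|h_1|^2+|g_{2,1}|^2>\phi_1/\rho$.

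Next I would fix the distributions. Since the users are ordered, $(|h_1|^2,|h_2|^2)$ are the minimum and maximum order statistics of two i.i.d. unit-mean exponentials, with joint density $2e^{-x-y}$ on $0<x<y$, while $|g_{2,1}|^2\sim\exp(1)$ is independent of both. The only statistical coupling between the strong-user event $\{|h_2|^2>\beta\}$ and the weak-user event is therefore the order constraint $|h_1|^2<|h_2|^2$. The joint non-outage probability thus becomes the integral of $2e^{-x-y-g}$ over $\{0<x<y,\ y>\beta,\ p_l x+g>\phi_1/\rho\}$.

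The key computational step is the order of integration. Writing $g=|g_{2,1}|^2$ and $L(g)=\max\{0,(\phi_1/\rho-g)/p_l\}$ for the smallest direct-link gain the weak user needs given relay gain $g$, I would integrate $x=|h_1|^2$ over $(L(g),y)$, then $y=|h_2|^2$ over $(\max\{\beta,L(g)\},\infty)$, and finally $g$ over $(0,\infty)$; the inner two integrals are elementary and produce the factors carrying $e^{-\phi_1/(\rho p_l)}$. Partitioning the $g$-axis by the two breakpoints $g=\phi_1/\rho$ (above which $L=0$ and the weak-user constraint is automatically met) and $g=c=\phi_1/\rho-p_l\beta$ (at which $L(g)=\beta$) splits the outer integral into three pieces: the piece $g\ge\phi_1/\rho$ contributes $e^{-\phi_1/\rho}\bigl(1-(1-e^{-\beta})^2\bigr)=e^{-\phi_1/\rho}\,\mathrm{Pr}(|h_2|^2>\beta)$; the piece $0\le g\le c$ (where $L\ge\beta$, so $\max\{\beta,L\}=L$) contributes the $e^{-2\phi_1/(\rho p_l)}$ term with upper limit $\Delta=\min\{c,\phi_1/\rho\}$; and the piece $c\le g\le\phi_1/\rho$ (where $\max\{\beta,L\}=\beta$) contributes the remaining two terms. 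Summing these three contributions and subtracting from one yields \eqref{eq:MOPFinal}.

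I expect the main obstacle to be the bookkeeping in this region partition rather than any single integral. One must track the sign of $c$ and the relative placement of $\beta$, $L(g)$ and $\phi_1/\rho$ so that the integration limits never invert; the cap $\Delta=\min\{c,\phi_1/\rho\}$ is precisely what guarantees that the ``cooperation-needed'' region $g<\phi_1/\rho$ and the ``strong-direct-link-needed'' region $g\le c$ are intersected consistently. The order-statistics coupling $|h_1|^2<|h_2|^2$ is what forces the kink at $\max\{\beta,L(g)\}$ and hence the simultaneous appearance of both $\beta$ and $c$ in the limits; once the regions are fixed, each contribution is a routine exponential integral.
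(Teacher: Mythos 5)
Your proposal is correct and takes essentially the same route as the paper's Appendix A: both condition on the relay gain $\delta=|g_{2,1}|^{2}$, split its range at the same two breakpoints $\phi_{1}/\rho$ and $c=\phi_{1}/\rho-p_{l}\beta$, and reduce the joint non-outage probability to elementary exponential integrals, with your three $g$-regions corresponding exactly to the paper's $\mathcal{M_{O,G}}$ contribution, $\mathcal{P}_{1}+\mathcal{P}_{2}$, and $\mathcal{P}_{3}+\mathcal{P}_{4}$. The only cosmetic difference is that you integrate against the sorted joint density $2e^{-x-y}$ on $0<x<y$ directly, whereas the paper unsorts $(|h_{1}|^{2},|h_{2}|^{2})$ into i.i.d.\ gains $\mu,\kappa$ and recovers the same factor of two through the symmetries $\mathcal{P}_{1}=\mathcal{P}_{2}$ and $\mathcal{P}_{3}=\mathcal{P}_{4}$.
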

	\begin{proof}
		See Appendix A.
	\end{proof}
	
	\noindent	\emph{Remark:} The MOP in \eqref{eq:MOPFinal} is obtained in closed-form and is given in terms of exponential functions which can be easily evaluated using existing mathematical packages such as MATLAB\textsuperscript{\textregistered}.
	The derived expression is useful for the practical scenario, where a 2-user CO-NOMA system may be deployed\footnote{For a large number of users, the deployment of NOMA and CO-NOMA becomes extremely challenging, e.g. the decoding at the users becomes more involved, the probability of error propagation increases, resource allocation becomes challenging \cite{BibMakki2020,BibIslam2017}. In order to counter these challenges, user pairing is proposed and most existing works consider only 2-user pairing \cite{BibMakki2020,MAIN, BibIslam2017}.}.
	Moreover, the power allocation which yields the minimum MOP can be obtained quickly using \eqref{eq:MOPFinal} as compared to running time-consuming simulations\footnote{The simulations were carried out on a Intel Core i5 processor @1.10GHz with 8 GB RAM. For the proposed power allocation scheme, the time for calculating the optimal power value using \eqref{eq:MOPFinal}, for a single value of $\rho$ and rate, $R_k$, was approximately 1ms. For conventional CO-NOMA of \cite{MAIN}, the calculation of optimal power value required averaging over $5 \times 10^5$ realizations of the channel and took approximately 180s. The time was calculated using tic/toc function in MATLAB.}.

	\section{Numerical Results}
	Monte-Carlo simulations were performed in MATLAB to analyze and compare the performance of CO-NOMA utilizing the new power allocation in
	\eqref{eq:powerallocation} with the allocation proposed in \cite{MAIN}. CO-NOMA with the new power allocation is denoted as CN-PA and CO-NOMA with the standard power allocation, of \cite{MAIN}, is denoted as CN-SA. The performance is also compared to OMA, where each users' data is transmitted with power $\rho$ in different time slots. The average channel and noise power is assumed to be unity. The simulation curves are generated by averaging over $5 \times 10^5$ realizations of the channel.
	
	\comment{
		\begin{figure}
			\centering
			\includegraphics[width=\linewidth]{images/optimalPowerproposedK2.eps}
			\caption{MOP versus varying the power coefficient, $p_2^2$, and transmit power, $\rho$, for CN-PA, ${K=2}$.}
			\label{fig:optimalPowerproposedK2}
		\end{figure}

	}
	
	\figref{fig:OutageProbK2}~compares the MOP performance, for ${K=2}$, of CN-SA and the proposed CN-PA scheme. For CN-PA, the optimal value of $p_2$ that minimizes \eqref{eq:MOPFinal} and also satisfies \eqref{eq:powerallocation} is calculated numerically in MATLAB. For CN-SA, the optimal power coefficients, which  achieve minimum MOP and satisfy ${p_K^2 \leq {p_{K-1}^2} \leq \cdots \leq {p_1^2 }}$, were calculated using a time-consuming brute-force simulation. 
	The optimal squared power coefficient values for $K=2$ are listed in Table II. It can be observed in \figref{fig:OutageProbK2}, that the proposed CN-PA scheme has a lower MOP compared to the CN-SA scheme. The reason for this is that in CN-SA, the strong user is allocated less power and has a higher outage probability which results in a larger MOP (it is shown in Appendix-B, if any user has a high outage probability, the MOP will be higher than it.). CN-PA allocates higher power to the strong user to reduce its outage probability, as a result of which the overall MOP reduces. Observing the optimal power value for the CN-SA in Table II, $p_1^2=.51$ and as a consequence $p_2^2=.49$. This yields a power-balanced scenario $(p_1^2\approx p_2^2)$, in which the practical SIC decoder might not perform well \cite{BibPan2017}. On the contrary, the difference in the optimal power coefficient values for CN-PA is very large, which is desirable for error free SIC decoding. For comparison, the performance of CN-SA in a non-power balanced scenario, i.e. $p_1^2=0.8$, is also plotted. It can be noted that, in this case, the CN-PA scheme offers a 3 dB gain over the CN-SA scheme when $R_k=1$, $\forall k$. Moreover, it can also be noted that the simulation results match exactly with the derived MOP equation in~\eqref{eq:MOPFinal}.
	
	\comment{\figref{fig:OutageProbK2} also shows the MOP comparison for $K=3$ users, and again in this case, the proposed CN-PA scheme has a lower MOP as compared to the CN-SA scheme.}
	
	\begin{table}
		{
			\caption{Optimal squared power coefficients for $K=2$.} 
			\vspace{-8pt}
			\resizebox{1\columnwidth}{!}
			{	
				\begin{tabular}{c|l|c|c|c|c|c|c|c|c|c}
					& & \textbf{$\rho$} & \textbf{0 dB} & \textbf{3 dB} & \textbf{6 dB}& \textbf{9 dB}& \textbf{12 dB}& \textbf{15 dB}& \textbf{18 dB}& \textbf{21 dB}\\ 
					\hline
					\multirow{2}{*}{$R_k=1$}	&CN-PA&$p_2^2$ & 0.995 &    0.995&    0.845&    0.810&    0.795&    0.790&    0.790&    0.790\\ 
					&CN-SA&$p_1^2$  &0.510&    0.510&    0.510&    0.560&    0.560&    0.560&    0.560&    0.560\\ 
					\hline			
					\multirow{2}{*}{$R_k=2$}	&CN-PA&$p_2^2$ &0.995 &  0.995&    0.995&    0.995&    0.930&    0.895&    0.885&    0.880\\ 
					&CN-SA&$p_1^2$  &0.510&   0.510&    0.510&    0.510&    0.510&    0.510&    0.540&    0.560\\ 
				\end{tabular}
			}
		}
		\vspace{-16pt} 
	\end{table}

	\begin{figure}  
		\centering 
		\includegraphics[width=.8\linewidth]{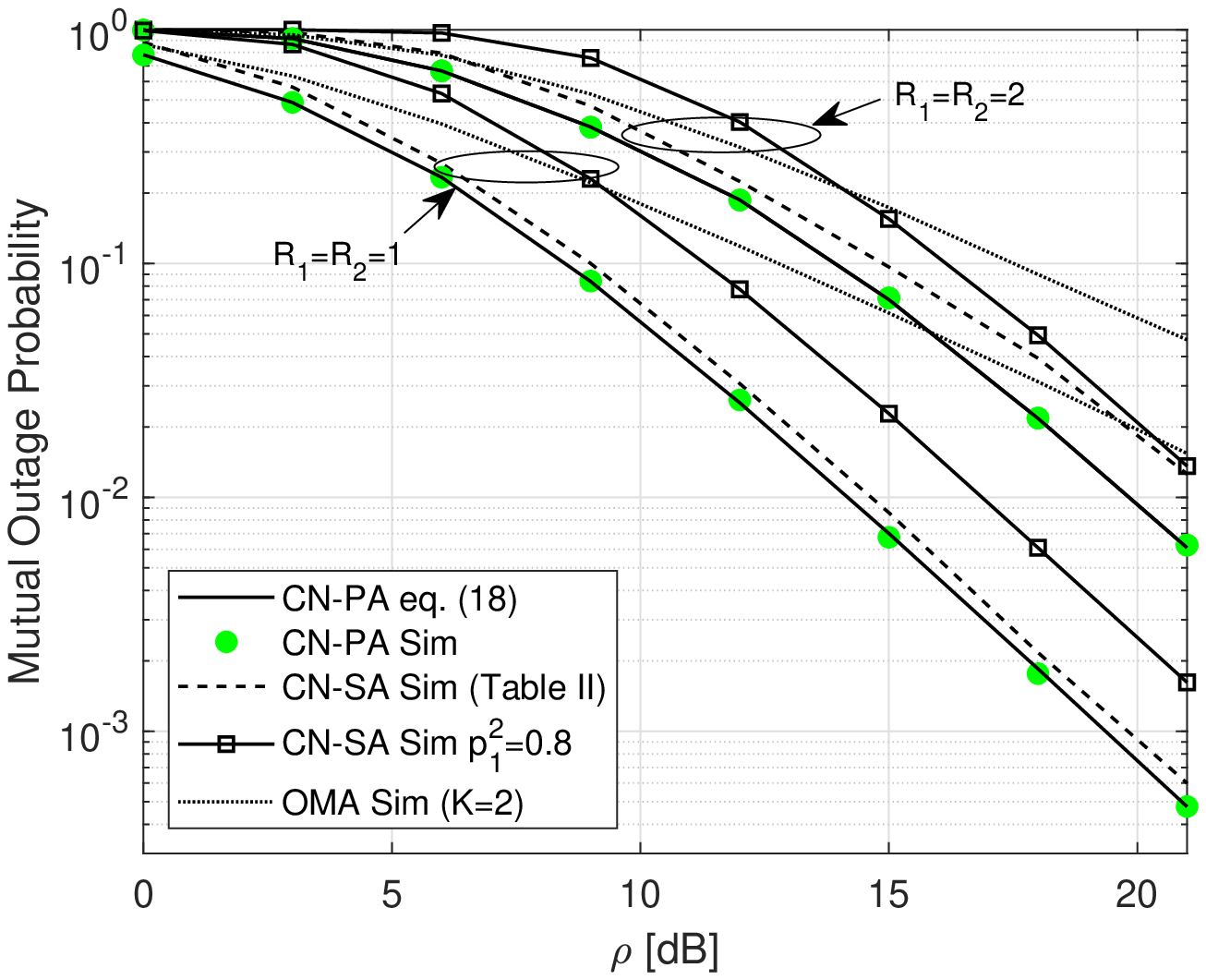}
		\caption{Comparison of MOP of CO-NOMA schemes for $K=2$ users.}
		\label{fig:OutageProbK2}
	\end{figure}
	
	\comment{
		
		\figref{fig:FinalResultsK2sim} shows the outage probability of the respective users for two user CN-PA. The asymptotic outage probability (as $\rho \rightarrow \infty$) is also plotted and it can be observed that both the users are able to achieve diversity order 2. Moreover, the simulation results match exactly with the analytical results.
		\comment{
			
			\[
			F_{\gamma_{1}}^{\infty}\left(\gamma\right)=
			\frac{\gamma^{2}}{\rho^2 |p_1|^2}\left(\frac{2}{2- |p_1|^2}\right)+\frac{\gamma^{2}}{\rho^2}\left(\frac{1}{ |p_1|^2-2}\right)
			\]
			
			\[
			F_{\gamma_{1}}^{\infty}\left(\gamma\right)=
			\frac{\gamma^{2}}{\rho^2 |p_1|^2}\left(\frac{2}{2- |p_1|^2}\right)+\frac{\gamma^{2}}{\rho^2}\left(\frac{1}{ |p_1|^2-2}\right)
			\]

			\[
			F_{\gamma_{1}}^{\infty}\left(\gamma\right)=
			\frac{\gamma^{2}}{\rho^2 |p_1|^2}
			\]

			\[
			F_{\gamma_{2}}^{\infty}\left(\gamma\right)=\frac{\gamma^2}{\rho^2}\left(\frac{1}{|p_{2}|^{2}-\gamma|p_{1}|^{2}}\right)^{2}
			\]	}
		\begin{figure}
			\centering 
			\includegraphics[width=.83\linewidth]{UserOutageR123Final.eps}
			\caption{Outage probability of each user of the proposed CN-PA scheme for $K=2$ users. Asymptotic performance curves are plotted using \eqref{eq:AsymptoticWeakUser} and \eqref{eq:AsymptoticRemainingUser}.}
			\label{fig:FinalResultsK2sim}
		\end{figure} 
	}
	
	\comment{
		To analyse and compare the algorithms, the sum capacity of both schemes is shown in \figref{fig:comaprisoncapsum}. It can be noted that at low
		transmit power, $\rho$, the proposed CN-PA scheme gives higher capacity. However, at high transmit power, the CN-SA scheme gives much higher capacity.
		From this result we can deduce that our proposed CN-PA scheme is useful for scenarios where reliable communication is required at low data rates. 
		(TAKE $B=1$ IN THE PLOT). 
		
		\begin{figure}[H]
			\includegraphics[width=\linewidth]{images/comaprisoncapsumcap.eps}
			\caption{Sum Capacity of CN-SA and CN-PA with $B=5$MHz }
			\label{fig:comaprisoncapsum}
		\end{figure} 
	}
	

	\begin{figure}
		\centering 
		\includegraphics[width=.8\linewidth]{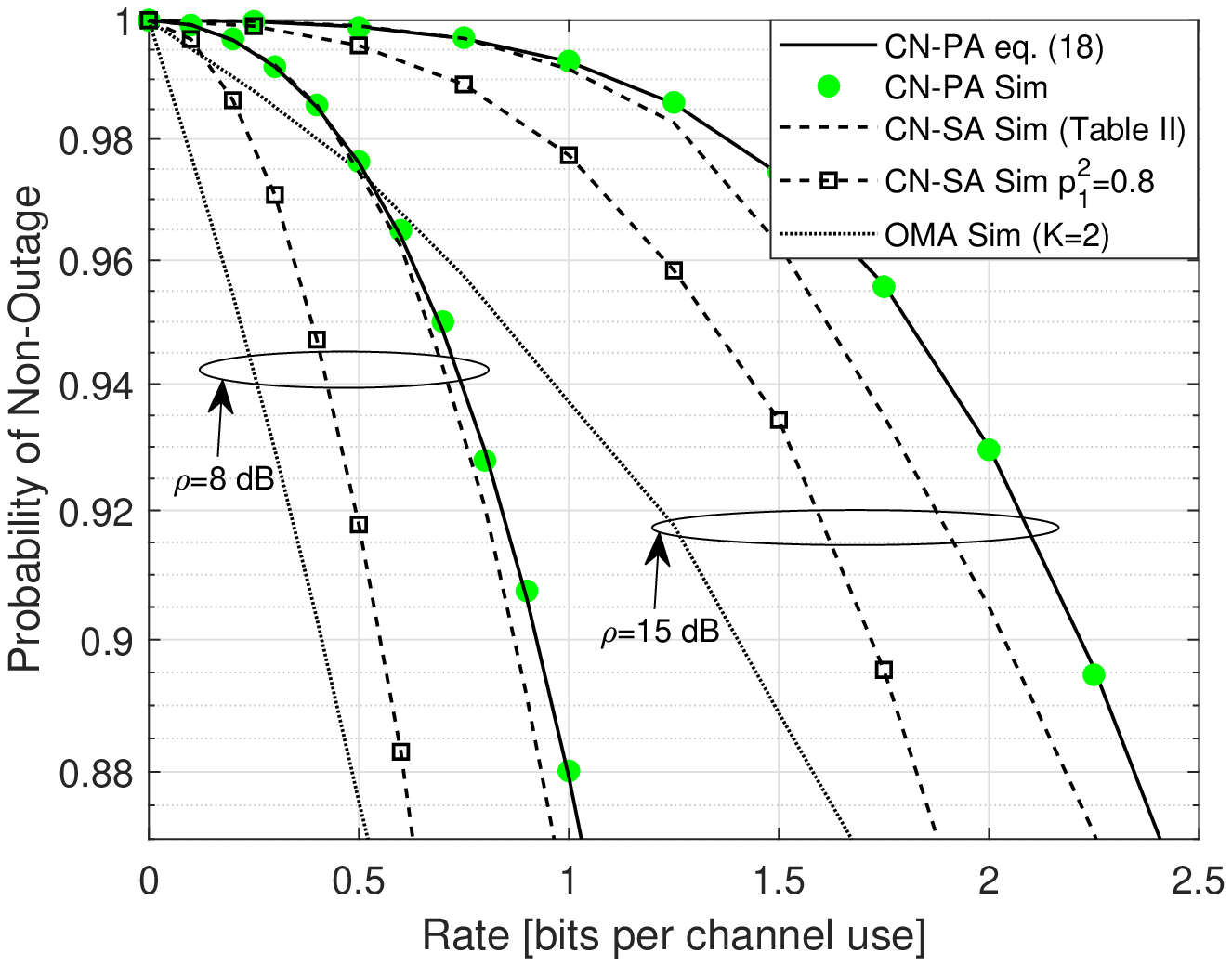}
		\caption{Outage Capacity: Rate supported by the CO-NOMA schemes against probability of non-outage, where $K=2$ and $R_1=R_2$.}
		\label{fig:VaryRatePlotOutage8dB15dB}
	\end{figure}

		\figref{fig:VaryRatePlotOutage8dB15dB} compares the outage capacities achieved by the schemes. For a given target rate, the curve specifies the probability of non-outage. It can be observed that the rate supported by the CN-PA scheme is higher compared to both CN-SA and OMA. Infact, in a practical non-power-balanced scenario (i.e. CN-SA with $p_1^2=0.8$) with 90\% non-outage, CN-PA is able to support approximately 30\% higher transmission rate at 15 dB, compared to CN-SA. 
	
	\section{Conclusion}
	A heuristic power allocation scheme for CO-NOMA that improves the transmission reliability has been proposed. Numerical simulations show that the proposed scheme achieves a lower MOP and supports a higher transmission rate as compared to the conventional scheme. Moreover, the exact closed-form analytical expressions for MOP for the two user case are derived and it is shown that each user is able to achieve full diversity.
	
	\section*{Appendix A}
	\comment{
		time slot 1: BS broadcasts. The received signal at the user with poor
		channel in the first time slot is 
		
		\[
		y_{1}=\sqrt{\rho}p_{1}h_{1}s_{1}+\sqrt{\rho}p_{2}h_{1}s_{2}+n_{1}
		\]
		where $|h_{1}|^{2}<|h_{2}|^{2}$ and received signal at the user with
		better channel in the first time slot is 
		
		\[
		y_{2}=\sqrt{\rho}p_{1}h_{2}s_{1}+\sqrt{\rho}p_{2}h_{2}s_{2}+n_{2}
		\]
		Power allocation: $|h_{1}|^{2}<|h_{2}|^{2}$ and $p_{1}<p_{2}$ implies
		best channel gets higher power and poor channel gets more copies!
		Best channel user will decode first and forward decoded data to farthest
		user with poor channel. Poor channel user gets advantage of multiple
		signal copies. Best channel user gets advantage of high power. So
		best user will decode signal first using SIC. It will decode user
		1 data and then decode its own 
		
		\textbf{SNRs at user 2 (Best user); }
	}
	
	For the two user case, MOP is given as 
	\begin{equation}
	\label{eq:A1m}
	\mathcal{M_O}=1-\text{Pr}\left(\gamma_{1}>\phi_{1},\gamma_{2}>\phi_{2}\right).
	\end{equation}
	Substituting the values of SINR from \eqref{eq:SINRuser2} and \eqref{eq:SINRuser1} yields 
	\comment{
		\[
		MOP=1-Pr\left(\rho p_{l}|h_{1}|^{2}+\rho|g|^{2}>\phi_{1},\frac{p_{h}|h_{2}|^{2}}{p_{l}|h_{2}|^{2}+\frac{1}{\rho}}>\phi_{2}\right)
		\]
		\[
		MOP=1-Pr\left(|h_{1}|^{2}>\frac{1}{p_{l}}\left(\frac{\phi_{1}}{\rho}-\delta\right),|h_{2}|^{2}>\frac{1}{\rho p_{h}}\frac{\phi_{2}}{\left(1-\phi_{2}\frac{P_{l}}{p_{h}}\right)}\right)
		\]
	}
	\begin{equation}
	\mathcal{M_O}=1-\text{Pr}\left(|h_{1}|^{2}>\alpha,|h_{2}|^{2}>\beta \right),
	\end{equation}
	where $\alpha=\frac{1}{p_{l}}\left(\frac{\phi_{1}}{\rho}-\delta\right)$, $\beta=\frac{1}{\rho p_{h}}\frac{\phi_{2}}{\left(1-\phi_{2}\frac{P_{l}}{p_{h}}\right)}$
	and $\delta=|g|^{2}$. MOP can be further expressed as
	\comment{
		\[
		MOP=\begin{cases}
		1-Pr\left(|h_{2}|^{2}>\frac{1}{\rho p_{h}}\frac{\phi_{2}}{\left(1-\phi_{2}\frac{P_{l}}{p_{h}}\right)}\right) & \delta>\frac{\phi_{1}}{\rho}\\
		1-Pr\left(|h_{1}|^{2}>\frac{1}{p_{l}}\left(\frac{\phi_{1}}{\rho}-\delta\right),|h_{2}|^{2}>\frac{1}{\rho p_{h}}\frac{\phi_{2}}{\left(1-\phi_{2}\frac{P_{l}}{p_{h}}\right)}\right) & \delta<\frac{\phi_{1}}{\rho}
		\end{cases}
		\]
	}
	\begin{equation}
	\label{eq:MOPA0}
	\mathcal{M_O}=\begin{cases}
	\mathcal{M_{O,G}}=1-\text{Pr}\left(|h_{2}|^{2}>\beta\right) & \delta>\frac{\phi_{1}}{\rho}\\
	\mathcal{M_{O,L}}=	1-\text{Pr}\left(|h_{1}|^{2}>\alpha,|h_{2}|^{2}>\beta\right) & \delta<\frac{\phi_{1}}{\rho}.
	\end{cases}
	\end{equation}
	\emph{Expression for $\mathcal{M_{O,G}}$:}
	\begin{equation}
	\label{eq:MOPA1}
	\mathcal{M_{O,G}}=1-\text{Pr}\left(|h_{2}|^{2}>\beta\right)=1-\left(1-e^{-\frac{1}{\rho
			p_{h}}\frac{\phi_{2}}{\left(1-\phi_{2}\frac{P_{l}}{p_{h}}\right)}}\right)^{2}.
	\end{equation}
	\emph{Expression for $\mathcal{M_{O,L}}$:}  Let the unsorted channel power gains be denoted by $\mu$ and $\kappa$, which implies $|h_{2}|^{2}=\max\left\{ \mu,\kappa\right\} $ and $|h_{1}|^{2}=\min\left\{\mu,\kappa\right\} $. $\mathcal{M_{O,L}}$ can be expressed as
	\begin{equation}
	\mathcal{M_{O,L}}=1-\text{Pr}\left(\min\left\{ \mu,\kappa\right\} >\alpha,\max\left\{ \mu,\kappa\right\} >\beta\right).
	\end{equation}
	$\mathcal{M_{O,L}}$ can be further expressed as
	\begin{equation*}
	\begin{split}
	\mathcal{M_{O,L}}=1 -\left(\text{Pr}\left(\delta<\frac{\phi_{1}}{\rho},\mu>\alpha,\kappa>\beta,\kappa>\mu\right)
	+\text{Pr}\left(\delta<\frac{\phi_{1}}{\rho},\kappa>\alpha,\mu>\beta,\mu>\kappa\right)\right).
	\end{split}
	\end{equation*}
	The $\mathcal{M_{O,L}}$ expression can be further expanded as
	\begin{equation}
	\label{eq:MOPProbform2}
	\begin{split}
	\mathcal{M_{O,L}}=1&-\text{Pr}\left(\delta<\frac{\phi_{1}}{\rho},\alpha>\beta,\alpha<\mu<\kappa\right)
	\\
	&-\text{Pr}\left(\delta<\frac{\phi_{1}}{\rho},\alpha>\beta,\alpha<\kappa<\mu\right)
	\\
	&-\text{Pr}\left(\delta<\frac{\phi_{1}}{\rho},\alpha<\beta,\alpha<\mu<\kappa,\beta<\kappa\right)
	\\
	&-\text{Pr}\left(\delta<\frac{\phi_{1}}{\rho},\alpha<\beta,\alpha<\kappa<\mu,\beta<\mu\right).
	\end{split}
	\end{equation}
	Condition $(\alpha>\beta)$, is equivalent to $(\delta<c)$, where
	$\scriptstyle c=\frac{1}{\rho}\left(\phi_{1}-\frac{\phi_{2}}{\left(\frac{p_{h}}{p_{l}}-\phi_{2}\right)}\right)$.
	This implies, the condition $\scriptstyle\left\{ \delta<\frac{\phi_{1}}{\rho}\cap\alpha>\beta\right\} $
	can be combined as $\scriptstyle\left\{ \delta<\Delta\right\} $, where $\scriptstyle\Delta=\min\left\{ c,\frac{\phi_{1}}{\rho}\right\} $ and condition
	$\scriptstyle\left\{ \delta<\frac{\phi_{1}}{\rho}\cap\alpha<\beta\right\} $ is equivalent to $\scriptstyle\left\{
	c<\delta<\frac{\phi_{1}}{\rho}\right\} $.
	The resulting expression of $\mathcal{M_{O,L}}$ is expressed as
	\comment{
		\[
		MOP=1-\left(Pr\left(\delta<\Delta,\alpha<\mu<\kappa\right)+Pr\left(\delta<\Delta,\alpha<\kappa<\mu\right)+Pr\left(c<\delta<\frac{\phi_{1}}{\rho},\alpha<\mu<\kappa,\beta<\kappa\right)+Pr\left(c<\delta<\frac{\phi_{1}}{\rho},\alpha<\kappa<\mu,\beta<\mu\right)\right)
		\]
		This can be expressed as
	}
	\begin{equation}
	\label{eq:MOPA2}
	\begin{split}
	\mathcal{M_{O,L}}=1-\left(\mathcal{P}_{1}+\mathcal{P}_{2}+\mathcal{P}_{3}+\mathcal{P}_{4}\right),
	\end{split}
	\end{equation}
	where
	$\mathcal{P}_{1}=\text{Pr}\left(\delta<\Delta,\alpha<\mu<\kappa\right)$,
	$\mathcal{P}_{2}=\text{Pr}\left(\delta<\Delta,\alpha<\kappa<\mu\right)$,
	$\mathcal{P}_{3}=\text{Pr}\left(c<\delta<\frac{\phi_{1}}{\rho},\alpha<\mu<\kappa,\beta<\kappa\right)$
	and $\mathcal{P}_{4}=\text{Pr}\left(c<\delta<\frac{\phi_{1}}{\rho},\alpha<\kappa<\mu,\beta<\mu\right)$.
	$\mathcal{P}_{1}$ can be obtained as
	\begin{equation*}
	\begin{split}
	\mathcal{P}_{1}=\text{Pr}\left(\delta<\Delta,\alpha<\mu<\kappa\right)=\int_{0}^{\Delta}\int_{\alpha}^{\infty}\int_{x}^{\infty}f_{\mu,\kappa,\delta}\left(x,y,z\right)dydxdz.
	\end{split}
	\end{equation*}
	Substituting the joint PDF, $ f_{\mu,\kappa,\delta}\left(x,y,z\right)=e^{-\left(x+y+z\right)}u(x)u(y)u(z)$, and solving the resulting integral yields
	\comment{
		\[
		\mathcal{P}_{1}=\int_{0}^{\Delta}\int_{\alpha}^{\infty}\int_{x}^{\infty}e^{-\left(x+y+z\right)}dydxdz
		\]
		Solving yields,
	}
	\begin{equation}
	\mathcal{P}_{1}=\frac{1}{2}\frac{e^{-2\left(\frac{1}{p_{l}}\frac{\phi_{1}}{\rho}\right)}}{\left(1-\frac{2}{p_{l}}\right)}\left(1-e^{-\left(1-\frac{2}{p_{l}}\right)\Delta}\right).
	\end{equation}
	Similarly, $\mathcal{P}_{2}=\int_{0}^{\Delta}\int_{\alpha}^{\infty}\int_{\alpha}^{x}f_{\mu,\kappa,\delta}\left(x,y,z\right)dydxdz$ is solved to yield
	\begin{equation}
	\begin{split}
	\mathcal{P}_{2}=\left(\frac{1}{2}\frac{e^{-\frac{2}{p_{l}}\frac{\phi_{1}}{\rho}}}{\left(1-\frac{2}{p_{l}}\right)}\left(1-e^{-\left(1-\frac{2}{p_{l}}\right)\Delta}\right)\right).
	\end{split}
	\end{equation}
	It can be noted that $\mathcal{P}_{2}=\mathcal{P}_{1}$, because the distribution is symmetric.
	$\mathcal{P}_{3}=\int_{c}^{\frac{\phi_{1}}{\rho}}\int_{\alpha}^{\infty}\int_{\beta}^{x}e^{-\left(x+y+z\right)}dydxdz$ is solved to yield
	\begin{equation*}
	\begin{split}
	\mathcal{P}_{3}
	=e^{-\left(\beta+\frac{1}{p_{l}}\frac{\phi_{1}}{\rho}\right)}\left(\frac{e^{-c\left(1-\frac{1}{p_{l}}\right)}-e^{-\frac{\phi_{1}}{\rho}\left(1-\frac{1}{p_{l}}\right)}}{\left(1-\frac{1}{p_{l}}\right)}\right)-\frac{e^{-2\beta}}{2}\left(e^{-c}-e^{-\frac{\phi_{1}}{\rho}}\right).
	\end{split}
	\end{equation*}
	$\mathcal{P}_{4}=\int_{c}^{\frac{\phi_{1}}{\rho}}\int_{\beta}^{\infty}\int_{\alpha}^{x}f_{\mu,\kappa,\delta}\left(x,y,z\right)dydxdz$  is solved to yield
	\begin{equation*}
	\begin{split}
	\mathcal{P}_{4}=\left(\frac{e^{-\beta}e^{-\frac{1}{p_{l}}\frac{\phi_{1}}{\rho}}}{\left(1-\frac{1}{p_{l}}\right)}\left(e^{-c\left(1-\frac{1}{p_{l}}\right)}-e^{-\frac{\phi_{1}}{\rho}\left(1-\frac{1}{p_{l}}\right)}\right)-\frac{e^{-2\beta}}{2}\left(e^{-c}-e^{-\frac{\phi_{1}}{\rho}}\right)\right).
	\end{split}
	\end{equation*}
	Again, in this case, $\mathcal{P}_{4}=\mathcal{P}_{3}$. Using \eqref{eq:MOPA0}, \eqref{eq:MOPA1} and \eqref{eq:MOPA2}, the final expression for MOP is obtained as \eqref{eq:MOPFinal}.

		\section*{Appendix B }
		
		The MOP of a two user network is defined in \eqref{eq:A1m}. It is trivial to show that 
		\begin{equation}
		\label{eq:A2m}
		\int_{x>\phi} f_{\gamma_i} (x) dx \geq  \int_{y>\phi} \int_{x>\phi} f_{\gamma_1,\gamma_2} (x,y) dx dy, 
		\end{equation}	 
		where $i=\{1,2\}$. \eqref{eq:A2m} implies
		\begin{equation}
		\label{eq:A3m}
		\text{Pr}\left(\gamma_i > \phi\right) \geq \text{Pr}\left(\gamma_{1}>\phi,\gamma_{2}>\phi\right).
		\end{equation}	 
		From \eqref{eq:A1m} and \eqref{eq:A3m}, we can deduce that $\text{Pr}\left(\gamma_{i}<\phi\right)=1-\text{Pr}\left(\gamma_{i}>\phi\right)\leq \mathcal{M_O}$ i.e. outage probability of the respective users will always be less than or equal to the MOP.

	\comment{
		
		\appendix
		
		\textit{Proof for Proposition 1}:
		The expression for outage probability at user 1 is given by:
		
		\[ F_{A}= \mathbb {P}(\gamma_{1} < \phi)= P_r \{{p_1^2|h_1^*|^2 + |g_{2,1}|^2 q_{2,1}^2< \frac{\phi}{\rho}}\} \\
		\]
		where ${\phi = {2^ {R_k}} - 1 }$
		and $|h_1^*|^2 = min \{ h_1, h_2 \} $
		\[
		F_h_1^*{(z)} = 1- (1-(1-\exp^{-z})^2  \]
		\[F_h_1^*{(z)} =1- \exp^{-2z} \]\\
		Finding the PDF:
		\[
		f_h_1^*{(z)}= \frac {d}{dz} F_h_1^*{(z)}= 2\exp^{-2z} \]
		Let $$ {\hat{g}= {p_1^2|h_1^*|^2}} $$
		\[
		f_{\hat{g}}(z)= \frac{2}{p_1^2} \exp^{-\frac{2z}{p_1^2}} \]
		
		We also know $|g_{2,1}|^2$ is a Raleigh faded channel so:
		
		\[f_{g_{2,1}} (g)=\exp^{-g} \]
		
		Solving expression like $z= x+y$ using $ f_z(z)=  \int_{-\infty}^{\infty} f_x(z-y)f_y(y)dy $ we get:

		\[
		f_A(z)=  \int_{0}^{z} \frac{2}{p_1^2} {\exp^{-\frac{2(z-g)}{p_1^2}} \exp^{-g}}\,dg\]
		
		After some algebraic manipulations,we get:
		
		\[ f_A(z)= {\frac{2}{2-p_1^2}} [\exp^{-z}-\exp^{\frac{-2z}{p_1^2}}] \]

		Calculating CDF:
		
		\[ F_A(z)=\int_0^z f_A(z)dz\]
		
		Further solving gives:
		\begin {equation}
		\label{eq:ExpUser1}
		F_A(z)=1+\frac{2}{2-p_1^2}\big[\frac{p_1^2}{2}\exp^{-2z/p_1^2}-\exp^{-z}\big]
		\end {equation} \\
		
		Hence the outage probability at User 1 \eqref{eq:Outage1} has been proven after replacing $ z= \frac{\phi}{\rho} $ in \eqref{eq:ExpUser1}  \\
		
		Similarly the expression for outage probability at user 2 is given by: \\
		\[
		F_{B}= \mathbb {P}(\gamma_{2} < \phi)= P_r \{\frac{p_2^2|h_2^*|^2}{|h_2^*|^2p_1^2+\frac{{1}{\rho}}< \phi}\}
		\]
		Simplifying this expression gives:
		
		\begin {equation}
		\label{eq:ExpUser2}
		F_{B}=  P_r \{ |h_2^*|^2<\frac{\phi}{(p_2^2-\phi p_1^2)\rho}\}
		\end {equation}
		
		where
		\[
		|h_2^*|^2 = max \{ h_1, h_2 \} \]
		with CDF:
		\[
		F_h_2^*{(x)} = (1-\exp^{-x})^2 \]
		
		Hence, the outage probability at User 2 as in equation \eqref{eq:Outage2} has also been proven by replacing $x=\frac{\phi}{(p_2^2-\phi p_1^2)\rho}$ in equation \eqref{eq:ExpUser2}

	}
	
	\bibliography{IEEEfull,document}
	\bibliographystyle{IEEEtran}

\end{document}